\newtheorem{lemma}{Lemma}
\newtheorem{remark}{Remark}
\newtheorem{corollary}{Corollary}
\newtheorem{theorem}{Theorem}
\newtheorem{definition}{Definition}
\long\def\symbolfootnote[#1]#2{\begingroup%
\def\thefootnote{\fnsymbol{footnote}}\footnote[#1]{#2}\endgroup}
\begin{document}

\title{Smart Meter Privacy with an Energy Harvesting Device and Instantaneous Power Constraints }

\author {
  \IEEEauthorblockN{Giulio Giaconi and Deniz G\"{u}nd\"{u}z}

  \IEEEauthorblockA{Imperial College London,  London, UK\\
    {\{g.giaconi, d.gunduz\}}{@imperial.ac.uk}
\and
\IEEEauthorblockN{H. Vincent Poor}
\IEEEauthorblockA{Princeton University, Princeton, NJ \\
poor@princeton.edu}
}
\thanks{This research was supported in part by the U.S. National Science Foundation under Grant CMMI-1435778.}

\author{
\vspace{-0.7cm}}

}

\maketitle

\begin{abstract}
A \textit{smart meter} (SM) periodically measures end-user electricity consumption and reports it to a utility provider (UP). Despite the advantages of SMs, their use leads to serious concerns about consumer privacy. In this paper, SM privacy is studied by considering the presence of an energy harvesting device (EHD) as a means of masking the user's input load. The user can satisfy part or all of his/her energy needs from the EHD, and hence, less information can be leaked to the UP via the SM. The EHD is typically equipped with a rechargeable energy storage device, i.e., a battery, whose instantaneous energy content limits the user's capability in covering his/her energy usage. Privacy is measured by the information leaked about the user's real energy consumption when the UP observes the energy requested from the grid, which the SM reads and reports to the UP. The minimum information leakage rate is characterized as a computable information theoretic single-letter expression when the EHD battery capacity is either infinite or zero. Numerical results are presented for a discrete binary input load to illustrate the potential privacy gains from the existence of a storage device.
\end{abstract}
\IEEEpeerreviewmaketitle

\section{Introduction}

The transition from a conventional power distribution network to a \textit{smart grid} is a necessity, not only to replace a less efficient system, but also to address new environmental and societal challenges. Smart grids provide many advantages for energy generation, transmission, distribution and consumption, and also allow the integration of renewable energy sources into the power network \cite{Mo:2012}. A key element of a smart grid is the \textit{smart meter} (SM), which records minutely the electricity consumption of a user, thereby permitting accurate estimation and control of the smart grid's behavior.

However, SMs may have unintended consequences regarding the privacy of users. The information collected and distributed inside the smart grid, namely the \textit{load profile} of the users, can potentially be used for other purposes, thereby raising the question of data security and privacy. From the load consumption profiles, i.e., the series of energy usage values that are regularly collected from a user, nonintrusive appliance load monitoring (NALM) techniques can be used to identify the user's habits, such as whether he/she is at home, the equipment he/she is using, and even the TV channel he/she is watching \cite{Rouf:2012}.


Privacy of SM data has recently become a topic of considerable research interest and many solutions have been proposed. In \cite{Costas:2010} the authors designed an anonymization system by introducing two separate SM identities (IDs), and only the low frequency ID is released to the utility provider (UP). Bohli \textit{et al.} \cite{Bohli:2010} considered the aggregation of information to cover individual energy consumption information. Other approaches are to mask the data by obfuscating it, as in \cite{Kim:2011} and \cite{Sankar:2013}, or to use homomorphic encryption, as in \cite{Garcia:2010}.

However, we note that the above techniques, all based on the manipulation of the SM readings, suffer from a further privacy risk. The energy consumed by a user is provided directly from the grid, which is fully controlled by the UP, and hence, the UP can easily embed additional meters to monitor the energy requested by a household or a business, in order to receive accurate information about the energy consumption without relying on SM readings. To cope with this issue an alternative approach is to directly modify the actual energy consumption profile of the user, rather than simply distorting the data sent to the UP. This can be done, for example, by using an energy storage device, as in \cite{Kalogridis:2010}, \cite{Varodayan:2011} and \cite{Tan:2013}, or by introducing an alternative energy source (AES), as proposed in \cite{Tan:2013}, \cite{Gunduz:2013}, \cite{Gomez:2013} and \cite{Gomez:2015}. We adopt the latter approach, and focus on providing privacy by using an energy harvesting device (EHD) with a rechargeable energy storage component, which we duly call a battery. A similar model, studied in \cite{Gomez:2015}, imposes only an average power constraint on the energy the user can receive from the AES, which can be a microgrid, capable of providing any amount of energy at each time instant. The average power constraint may be imposed to limit the cost of the energy received from the AES. However, if the AES is an EHD with a battery, it may not be possible to control the amount of energy harvested at each time instant. Hence, since the energy available in the battery imposes limitations on the capability of the energy management unit (EMU) to reduce information leakage to the UP, it is important to take into account the size and state of the battery when designing an energy management policy.

We study the minimum amount of user's energy consumption information leaked to the UP, under instantaneous power constraints. We provide single-letter expressions for the minimum information leakage rate when the EHD battery capacity is either infinite or zero. In the case of discrete input load distributions, the minimization problem can be shown to be convex, and hence, it can be solved efficiently by numerical methods, such as the Blahut-Arimoto algorithm. Finally, we provide simulation results for a binary input load distribution.

The remainder of the paper is organized as follows. In Section \ref{sec:SystemModel} we introduce the system model, and in Sections \ref{sec:infinite} and \ref{sec:zero} we consider the case of infinite and zero battery capacity, respectively. In Section \ref{sec:binary} we simulate the system by considering a discrete binary input load distribution. Finally, conclusions are drawn in Section \ref{sec:conclusion}.

\section{System Model}\label{sec:SystemModel}

\begin{figure}[t]
\centering
\includegraphics[width=1\columnwidth]{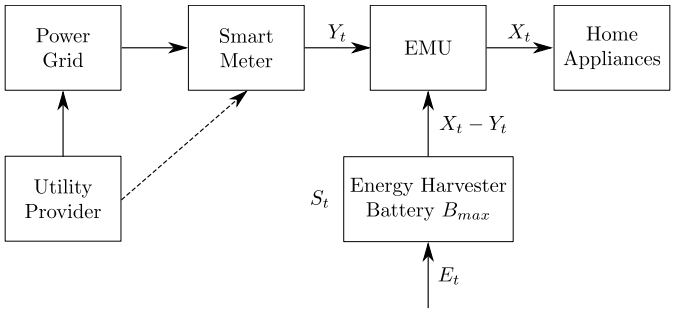}
\caption{The system model. $X_t$, $Y_t$ and $E_t$ are the user input load, the output load, and the energy harvested by the EHD at time $t$, respectively.}
\label{fig:SystemModel}
\end{figure}


We consider the discrete time system depicted in Fig. \ref{fig:SystemModel}, where each discrete time represents a time slot. The input load $X_t \in \mathcal{X}$ is the energy requested by the user in time slot $t$, while $Y_t \in \mathcal{Y}$ is the energy received from the UP, i.e., the output load. We assume that the entries of the input load sequence, $\{X_t\}_{t=1}^{\infty}$, are independent and identically distributed (i.i.d.) with distribution $p_X$. In time slot $t-1$, the EHD harvests $E_t\in \mathcal{E}$ units of energy from the environment, which become available to the EMU at the beginning of time slot $t$. The unused energy is then stored in a battery, whose maximum capacity is $B_{max}$. The harvested energy sequence, $\{E_t\}_{t=1}^{\infty}$, is modelled as having i.i.d. entries with distribution $p_E$. We define the average value of the harvested energy in each time slot as $\bar{P}_E =\mathbb{E}[E]$.

The EMU satisfies the user's energy demand in all time slots from either the UP or the EHD. We do not allow extra energy to be drawn from the grid and then wasted. This could provide additional privacy to the user, albeit at a significantly higher energy cost. Also, the user cannot get energy from the grid and store it in the EHD battery. Hence, we have $Y_t \leq X_t$, and $X_t - Y_t$ denotes the amount of energy the EMU gets from the EHD in time slot $t$. Let $S_t\in [0, B_{max}]$ denote the state of the battery at the beginning of time slot $t$. We have
\begin{align}\label{battery_constraint}
	S_{t+1} = \max \{\min \{S_{t} + E_t - (X_t - Y_t), B_{max} \}, 0 \},
\end{align}
where $S_{t+1}$ denotes the state of the battery at the beginning of time slot $t+1$.


%


By measuring the privacy of a user, it is possible to compare effectively different privacy-preserving strategies and use the best one. However, coming up with a universal privacy measure is elusive. Previous works have explored various measures. Kalogridis \textit{et al.} \cite{Kalogridis:2010} considered the Kullback-Leibler divergence between the input and output loads, cluster classification for the input load and regression analysis. The authors in \cite{McLaughlin:2011} computed the number of features, i.e., the number of events that indicate whether a device is switched on or off, and the empirical entropy, considered as an upper bound on the information extractable via a NALM procedure. The studies in \cite{Varodayan:2011}, \cite{Tan:2013} and \cite{Gunduz:2013} considered \textit{mutual information} between the input and output loads as a measure of information leakage. Similarly, we also minimize the mutual information rate between $X^n$ and $Y^n$. We note that mutual information as a privacy measure is not limited by technology implementations or the complexity of the NALM procedure, and allows us to provide provable bounds on the achievable privacy.

%

We aim at designing energy management policies that will decide on the amount of energy to request from the UP in every time slot $t$, given the input load up to time $t$, $X^t$, the energy requested from the UP up to time $t-1$, $Y^{t-1}$, and the state sequence of the battery $S^{t}$. Our goal is to minimize the information leakage rate, i.e., the average of the mutual information between the input and output load sequences.


\subsection{Definitions}

\begin{definition}
A length-$n$ energy management policy is composed of, possibly random, power allocation functions
\begin{equation}
f_t:\mathcal{X}^t \times \mathcal{Y}^{t-1} \times \mathcal{S}^{t} \rightarrow \mathcal{Y},
\end{equation}
for $t=1,...,n$, such that $0 \leq Y_t \leq X_t$ and the battery state evolves according to (\ref{battery_constraint}).
\end{definition}

\begin{definition}
The privacy achieved by a length-$n$ energy management policy is measured via the \textit{information leakage rate} defined as
\begin{equation}
I_n= \frac{1}{n} I \left( X^n; Y^n \right).
\end{equation}
\end{definition}

\begin{definition}
An information leakage rate $I$ is said to be \textit{achievable} if there exists a sequence of energy management policies such that $\lim_{n \to \infty}I_n \leq I$.
\end{definition}


\begin{definition}
For an SM system with given $p_X, p_E$ and $B_{max}$, the  \textit{minimum privacy leakage rate} $\mathcal{I}$ is defined as the infimum of the achievable information leakage rates.
\end{definition}

In \cite{Gomez:2015} the SM privacy problem is studied with the assumption that the EMU is only constrained by the average and peak values of the energy it can request from the AES. Note that this is a more relaxed problem compared to evaluating the minimum information leakage rate in the presence of a battery. The \textit{privacy-power function}, $\mathcal{I}(\bar{P}, \hat{P})$ is defined in \cite{Gunduz:2013} as the minimum information leakage rate that can be achieved when the energy management policy satisfies $E[\frac{1}{n} \sum_{t=1}^n (X_t - Y_t)] \leq \bar{P}$ and $0 \leq X_t - Y_t \leq \hat{P}$.

\begin{theorem}\label{th:average_peak}
\cite[Theorem 1]{Gunduz:2013} The privacy-power function $\mathcal{I}( \bar{P}, \hat{P})$ for an i.i.d. input load vector $X$ with distribution $p_{X}(x)$, when the average and peak values of the power provided by the AES are limited by $\bar{P}$ and $\hat{P}$, respectively, is given by
\begin{equation}\label{expr_privacy_power}
\mathcal{I}(\bar{P}, \hat{P}) = \inf_{\substack{p_{Y|X}(y|x): 0\leq \mathbb{E}[X-Y] \leq \bar{P} \\ 0 \leq X-Y \leq \hat{P}}} I\left(X;Y\right).
\end{equation}
\end{theorem}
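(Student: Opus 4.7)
My plan is to prove Theorem~\ref{th:average_peak} by establishing matching achievability and converse bounds on the minimum information leakage rate, following the standard single-letterization recipe for such rate-distortion-type characterizations.

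For achievability, the idea is to fix any $p_{Y|X}^{*}$ that is feasible in (\ref{expr_privacy_power}) and employ the memoryless stationary policy that, in each slot $t$, samples $Y_t$ from $p_{Y|X}^{*}(\cdot\mid X_t)$ independently of the past and future. Since $\{X_t\}$ is i.i.d.\ and the induced channel is memoryless, $\{(X_t,Y_t)\}$ is i.i.d., so $I(X^n;Y^n)=n\,I(X;Y)$ and the leakage rate equals $I(X;Y)$ for every $n$. The pointwise peak constraint $0\le X_t-Y_t\le \hat{P}$ is inherited directly from $p_{Y|X}^{*}$, and the average constraint is inherited in expectation. Taking the infimum over feasible $p_{Y|X}^{*}$ yields $\mathcal{I}(\bar{P},\hat{P})\le \inf I(X;Y)$.

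For the converse, I would start from any length-$n$ policy meeting the constraints and compute
\begin{align*}
\tfrac{1}{n}I(X^n;Y^n)
&= \tfrac{1}{n}\sum_{t=1}^{n} I(X_t;Y^n\mid X^{t-1}) \\
&\ge \tfrac{1}{n}\sum_{t=1}^{n} I(X_t;Y_t),
\end{align*}
using the chain rule, the i.i.d.\ property of $\{X_t\}$ (so $H(X_t\mid X^{t-1})=H(X_t)$), and ``conditioning reduces entropy''. Then I would introduce the averaged conditional law $\tilde{p}_{Y|X}(y\mid x)=\tfrac{1}{n}\sum_{t=1}^{n} p_{Y_t|X_t}(y\mid x)$ induced by the policy and invoke convexity of $I(X;Y)$ in $p_{Y|X}$ for fixed $p_X$ to obtain $\tfrac{1}{n}\sum_t I(X_t;Y_t)\ge I_{p_X\tilde{p}_{Y|X}}(X;Y)$. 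Since $\tilde{p}_{Y|X}$ inherits the peak constraint (each $p_{Y_t|X_t}(\cdot\mid x)$ is supported on $[\max(x-\hat{P},0),x]$, a property preserved under convex combinations) and the average constraint (by linearity of expectation applied to $\mathbb{E}[\tfrac{1}{n}\sum_t(X_t-Y_t)]\le\bar{P}$), the right-hand side is bounded below by $\inf I(X;Y)$ over the feasible set, matching the achievability bound.

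The main subtlety I anticipate is the averaging-and-convexity step: individually, the induced per-letter channels $p_{Y_t|X_t}$ need not obey the average power budget, only their arithmetic mean does, so one cannot apply the single-letter bound slot-by-slot. Verifying that the peak constraint really does transfer through the averaging in a support-preserving manner, and applying convexity in the correct direction (mutual information is convex in $p_{Y|X}$ with $p_X$ fixed, which is what makes the averaged channel a valid lower-bounding surrogate), is where the actual work concentrates; the remaining manipulations are routine entropy chain-rule calculations.
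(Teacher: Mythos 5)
Your proposal is correct. Note that the paper itself does not prove this statement---it is imported verbatim from \cite[Theorem~1]{Gunduz:2013}---so the only in-paper material to compare against is the converse template used for Theorems~\ref{th:IT} and~\ref{th:ITR}, which applies the chain rule, drops conditioning, and then bounds each per-letter term $I(X_t;Y_t)$ by the single-letter infimum \emph{slot by slot}. That slot-by-slot step is valid there only because those theorems involve purely peak (per-slot) constraints, so every induced per-letter channel is individually feasible; you correctly identify that it fails under an average power budget, and your fix---passing to the time-averaged channel $\tilde{p}_{Y|X}$ and invoking convexity of $I(X;Y)$ in $p_{Y|X}$ for fixed $p_X$, together with the observation that both the peak-support property and the average constraint survive the convex combination---is exactly the standard repair and is how the cited reference handles it (equivalently phrased there via convexity of $\mathcal{I}(\cdot,\hat{P})$ in $\bar{P}$ and Jensen's inequality). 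The achievability direction via the i.i.d.\ memoryless policy is likewise the intended one, so nothing is missing.
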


In the following, we compute the minimum information leakage rate for the infinite and zero battery scenarios.

\section{EHD with an Infinite Battery}\label{sec:infinite}

In this scenario we set $B_{max}=\infty$. This is an extreme situation that may occur if we consider particularly large battery sizes compared to the energy harvesting rate $\bar{P}_E$. When $B_{max}=\infty$, the constraints on the energy management policy are relaxed. In each time slot the EMU is limited by the available energy in the battery, which is simply the difference between the total harvested energy up to a certain time and the total energy that has been requested from the EHD up to that time. We have the following cumulative energy constraints:
\begin{equation} \label{eq:constr}
 	\sum_{t=1}^n (X_t - Y_t) \leq \sum_{t=1}^n E_t, \qquad \forall n.
\end{equation}




The following theorem states that the minimum information leakage rate when $B_{max} = \infty$ is equivalent to the average power-constrained scenario; that is, the cumulative constraints on the EMU policy do not reduce the achievable privacy compared to the average power-constrained case.

\begin{theorem}\label{th:intro}
If $B_{max}=\infty$, the minimum information leakage rate $\mathcal{I}_{\infty}$ for an i.i.d. input load $X$ with distribution $p_{X}$, and an energy harvesting process with average power $\bar{P}_E$, is given by
\begin{equation}
\mathcal{I}_{\infty} \triangleq \mathcal{I}(\bar{P}_E, \infty).
\end{equation}
\end{theorem}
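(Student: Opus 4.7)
The plan is to establish Theorem~\ref{th:intro} by proving matching bounds $\mathcal{I}_\infty \geq \mathcal{I}(\bar{P}_E, \infty)$ (converse) and $\mathcal{I}_\infty \leq \mathcal{I}(\bar{P}_E, \infty)$ (achievability) and then invoking Theorem~\ref{th:average_peak}. The converse is immediate: any length-$n$ policy satisfying the cumulative constraint~(\ref{eq:constr}) automatically satisfies $\mathbb{E}\bigl[\tfrac{1}{n}\sum_{t=1}^n(X_t-Y_t)\bigr] \leq \tfrac{1}{n}\sum_{t=1}^n \mathbb{E}[E_t] = \bar{P}_E$, so the feasible set of conditionals $p_{Y^n|X^n}$ for the infinite-battery problem is contained in that of the average-power problem with $\hat P = \infty$, and the infimum over the smaller feasible set can only be larger.

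For achievability, my idea is to emulate a near-optimal memoryless policy from the average-power problem using the infinite battery, up to a vanishing-probability deviation. Fix a small $\delta > 0$ and let $p^*_{Y|X}$ attain $\mathcal{I}(\bar{P}_E - \delta, \infty)$, so that $\mathbb{E}[X - Y^*] \leq \bar{P}_E - \delta$. I would use a two-phase policy: during an initial ``charging'' phase of length $n_0 = o(n)$ set $Y_t = X_t$, so the battery accumulates roughly $n_0 \bar{P}_E$ units of energy; for $t > n_0$, draw $\tilde Y_t$ independently from $p^*_{Y|X}(\cdot|X_t)$ and set $Y_t = \tilde Y_t$ whenever the battery can cover $X_t - \tilde Y_t$, falling back to $Y_t = X_t$ otherwise. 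Because the innovations $E_t - (X_t - \tilde Y_t)$ are i.i.d.\ with strictly positive mean $\delta$, the combination of the initial reserve and the positive drift ensures, by a large-deviations bound on random-walk minima, that the probability of ever invoking the fallback after $n_0$ can be made arbitrarily small, uniformly in $n$, by taking $n_0$ large enough. Consequently $\limsup_{n\to\infty} \tfrac{1}{n}I(X^n;Y^n) \leq I(X;Y^*) = \mathcal{I}(\bar{P}_E - \delta, \infty)$; letting $\delta \to 0$ and using continuity of $\mathcal{I}(\cdot, \infty)$ yields $\mathcal{I}_\infty \leq \mathcal{I}(\bar{P}_E, \infty)$.

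The main technical obstacle is bounding $\tfrac{1}{n}I(X^n;Y^n)$ rigorously for this modified policy, since $Y^n$ is not conditionally i.i.d.\ given $X^n$ and depends on the entire history through the battery state. I would split on the event $F_n$ that no fallback is invoked in slots $n_0+1, \ldots, n$: on $F_n$ the policy coincides with the memoryless $p^*_{Y|X}$ so a single-letter bound $\tfrac{1}{n}I(X^n;Y^n|F_n) \leq I(X;Y^*) + O(n_0/n)$ applies, while on $F_n^c$ (and during the charging phase) a crude entropy bound of order $\log|\mathcal{Y}|$ per slot contributes $P(F_n^c)\log|\mathcal{Y}| + O(n_0/n)$, which is negligible once $P(F_n^c)$ is made small. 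A subsidiary point is justifying continuity of $\mathcal{I}(\bar{P}, \infty)$ at $\bar{P} = \bar{P}_E$, which follows from its convexity and monotonic non-increasing dependence on $\bar{P}$ (so that left continuity, which is all that the $\delta \to 0^+$ limit requires, is automatic).
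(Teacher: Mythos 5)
Your proposal is correct and follows essentially the same route as the paper: the converse is the same feasible-set-inclusion observation, and your achievability argument is in effect the paper's ``store-and-hide'' scheme (an $o(n)$ charging phase followed by the memoryless conditional achieving $\mathcal{I}(\bar{P}_E-\delta,\infty)$, with the battery-underflow probability driven to zero via a law-of-large-numbers/large-deviations argument) merged with the fallback $Y_t=X_t$ of the paper's ``best-effort'' scheme. The technical points you flag --- handling the non-memoryless conditioning on the no-fallback event and the left-continuity of $\mathcal{I}(\cdot,\infty)$ as $\delta\to 0^+$ --- are real but are exactly the details the paper omits for space, so your treatment is, if anything, more complete than the published sketch.
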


It is shown in \cite{Gomez:2013} that, when the input load alphabet $\mathcal{X}$ is discrete, the output alphabet load alphabet $\mathcal{Y}$ can be restricted to $\mathcal{Y} = \mathcal{X}$ without loss of optimality. Given this restriction and the convexity of the privacy-power function, $I_{\infty}$ can be efficiently evaluated.

It is easy to see that $\mathcal{I}(\bar{P}_E, \infty)$ is a lower bound on the minimum information leakage rate under battery constraints, since this is equivalent to replacing the cumulative energy constraints with an average one. Next, we prove the achievability part of the theorem, and present two distinct schemes that achieve $I_{\infty}$.

\subsection{Store-and-Hide Scheme}

 As the name suggests, the \textit{store-and-hide scheme} consists of an initial phase, in which all the energy requests of the user are satisfied from the grid, while the harvested energy is stored in the EHD battery, and a second phase, during which the EMU starts using a stochastic policy to decide how much energy it will request from the EHD. The conditional probability specified in Theorem \ref{th:intro} is used in the second phase.



Consider $n$ time slots. In the first $h(n)$ time slots, called the \textit{storage phase}, no privacy is achieved because all the energy requested by the user is taken from the grid. In the remaining $n-h(n)$ time slots, called the \textit{hiding phase}, the user demand is satisfied by taking energy from both the grid and the battery, thus masking the real consumption. We assume that $h(n)\in o(n)$, with $ \lim_{n \to \infty} h(n)= \infty $, and $ \lim_{n \to \infty} n-h(n)= \infty $.

It is noteworthy that no information about the recharge process of the battery is required, and all the EMU needs to know is the average power of the harvesting process, $\bar{P}_E$. It is then possible to show that the store-and-hide scheme satisfies the constraints in (\ref{eq:constr}) with probability arbitrarily close to $1$ as long as $\mathbb{E}[X-Y] < \bar{P}_E$. In short, we show that for any $\epsilon > 0$ and sufficiently large $n$, the following relation holds:
\begin{equation}
\Pr  \left( \bigcup_{k=1}^{n} \left\{ \sum_{t=1}^{k} E_t < \sum_{t=1}^{k} X_t - Y_t \right\} \right)\le \epsilon.
\label{eq:cond1}
\end{equation}

This condition affirms that, for arbitrarily large $n$, the probability that, at any time slot $k$, the amount of energy stored in the battery is less than the energy required by the EMU will be arbitrarily small. The proof, analogous to that of \cite[Lemma~1]{Ozel:2012}, and omitted due to space limitations, requires the application of the weak and the strong law of large numbers along with the tail behaviors of sums of i.i.d. random variables.

\subsection{Best-Effort Scheme}

In the \textit{best-effort scheme} the EMU does not wait to request energy from the EHD; instead, it follows the same stochastic policy as in the hiding phase of the store-and-hide scheme, and whenever the requested energy is not available in the EHD battery, all the input load is satisfied from the grid.


In each time slot $t$, the EMU, based on the instantaneous input load $X_t$, decides on the portion of the input load to be received from the grid, $Y_t$, using the conditional probability distribution $p_{Y|X}$, which satisfies $\mathbb{E}[X-Y] < \bar{P}_E$. Then, if there is enough energy in the battery to satisfy the requested energy from the EHD, i.e., if the condition $S_{t} + E_{t}\geq X_t-Y_t$ is satisfied, then the EMU gets $X_t - Y_t$ units of energy from the EHD battery, and $Y_t$ from the grid; otherwise, all the input load is satisfied directly from the grid, i.e., $Y_t = X_t$. The energy in the EHD is thus updated according to the following relation:
\begin{equation}
S_{t+1}=S_{t}+E_{t}-(X_{t}-Y_{t}) \cdot \textbf{1}(S_{t}+E_{t}\geq X_{t}-Y_{t}),
\end{equation}
where $\textbf{1}(S_{t}+E_{t}\geq X_{t}-Y_{t})=1$ if $S_{t}+E_{t}\geq X_{t}-Y_{t}$, and $0$ otherwise.

Note that when there is not enough energy in the battery, the energy request of the user is satisfied completely by the UP, leading to the maximum information leakage. To find the overall information leakage rate, we need to show that these events happen only for a limited number of time slots. This is proven in the following lemma.

\begin{lemma} \label{lem2}
In the best-effort scheme, if $\mathbb{E}[X-Y] < \bar{P}_E$, then the condition $S_{t}+E_{t}< X_t - Y_t$ holds only for finitely many time slots as $n$ grows to infinity.
\end{lemma}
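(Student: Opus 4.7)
The plan is to couple the actual battery process $\{S_t\}$ with an unconstrained companion process that is allowed to take negative values, invoke the strong law of large numbers (SLLN) to show that the companion drifts to $+\infty$ almost surely, and then transfer this conclusion to $\{S_t\}$ via a pathwise comparison.

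First, I would let $Y_t$ denote the ``tentative'' output drawn from $p_{Y|X}(\cdot\mid X_t)$ by the best-effort policy (the quantity appearing inside the indicator in the state recursion given in the paper), and introduce the unconstrained battery $\tilde{S}_t$ with $\tilde{S}_0 = S_0$ and
\begin{equation}
\tilde{S}_{t+1} = \tilde{S}_t + E_t - (X_t - Y_t).
\end{equation}
Because $\{X_t\}$ and $\{E_t\}$ are i.i.d.\ and $p_{Y|X}$ is a fixed stochastic kernel, the increments $E_t - (X_t - Y_t)$ form an i.i.d.\ sequence with common mean $\bar{P}_E - \mathbb{E}[X-Y] > 0$. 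The SLLN therefore yields $\tilde{S}_t/t \to \bar{P}_E - \mathbb{E}[X-Y]$ almost surely, so $\tilde{S}_t \to \infty$ a.s.

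Next I would prove by induction on $t$ that $S_t \geq \tilde{S}_t$ on every sample path. On any slot where $S_t + E_t \geq X_t - Y_t$, both recursions subtract the same quantity and the nonnegative gap $S_t - \tilde{S}_t$ is preserved. On a ``failure'' slot where $S_t + E_t < X_t - Y_t$, the actual battery retains the harvested energy while $\tilde{S}_t$ decreases by $X_t - Y_t \geq 0$, so the gap only grows. Combined with the preceding step, this gives $S_t \to \infty$ almost surely.

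Finally, since $\mathcal{X}$ is discrete and bounded by some $x_{\max} < \infty$, we have $X_t - Y_t \leq x_{\max}$ for every $t$. On any sample path along which $S_t \to \infty$, there exists a (random) index $T < \infty$ such that $S_t \geq x_{\max}$ for all $t \geq T$, and then $S_t + E_t \geq X_t - Y_t$ is automatic; hence the failure event $\{S_t + E_t < X_t - Y_t\}$ occurs for at most $T$ values of $t$. The main delicate point, and essentially the only non-routine step, is setting up the coupling so that the unconstrained increments are genuinely i.i.d.\ despite the randomization through $p_{Y|X}$; once this is in place the rest reduces to the SLLN and a pathwise domination argument.
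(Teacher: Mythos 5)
Your proposal is correct and follows the same route the paper indicates for this lemma (whose proof it omits): an application of the strong law of large numbers, which you make rigorous via the i.i.d.\ increments of an unconstrained companion process and a pathwise domination $S_t \geq \tilde{S}_t$. One small remark: your last step can be freed from the boundedness assumption on $\mathcal{X}$ by noting that a failure at slot $t$ means $S_t + E_t - (X_t - Y_t) < 0$, which by your domination implies $\tilde{S}_{t+1} < 0$, and $\tilde{S}_t \to \infty$ a.s.\ already rules this out for all but finitely many $t$.
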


The proof, omitted due to space limitations, is an application of the strong law of large numbers. It follows from Lemma \ref{lem2} that the leakage of full information takes place only at finitely many time slots as $n \rightarrow \infty$. Thus, the information leakage rate is equivalent to that achieved under an average power constraint.

\begin{remark}
The two energy management policies described above achieve the same privacy performance even though they have some conceptual differences. It is important to remark that the initial assumption is to have an i.i.d. energy harvesting process. However, this condition is typically not satisfied. For instance, if we consider a solar panel as an EHD, the recharge process is clearly not i.i.d and varies greatly from day to night, and also over seasons. Thus, an optimal management strategy similar to the one described in \cite{Ozel:2012} in the dual problem of capacity determination, can be derived. In that paper, the authors divide the energy arrivals and expenditures into periods, consider different average harvesting rates for every period, and finally maximize the average throughput over all the periods.
\end{remark}


\section{EHD without a Battery} \label{sec:zero}

Here, we focus on the case in which the EHD has no battery for energy storage, i.e., $B_{max}=0$. The energy harvested in time slot $t-1$ and available at the beginning of time slot $t$, $E_t$, can be considered as the state information. Given $E_t$ and the input load $X_t$, the EMU decides on the amount of energy to use from the grid and from the EHD. Thus, this is an SM system with a stochastic peak power constraint on the energy that the EMU can obtain from the EHD. This constraint, represented by $E_t$, is causally known by the EMU, while the UP may or may not know its realization.




In each time slot $t$ the energy requested from the EHD, $X_t-Y_t$, is limited by the energy harvested in time slot $t-1$, $E_t$, i.e.,
\begin{equation}\label{eq:peakconstraints}
0 \leq X_t-Y_t\leq E_t, \qquad t=1,...,n.
\end{equation}

Note that, as opposed to the infinite battery scenario in (\ref{eq:constr}), here the past has no influence on the energy constraints, since there is no battery, and thus no memory in the system.


We first consider the minimum leakage rate when the EHD harvests a constant and fixed amount of energy in every time slot, i.e., $\mathcal{E} = e$. We remark that $e$ is not a random variable, and is known by both the EMU and the UP. The privacy-power function is thus obtained by considering only a peak power constraint on the amount of energy the EMU can request from the harvester. We denote the corresponding minimum information leakage rate by $\mathcal{I}(e)$, which can be obtained as a special case of Theorem \ref{th:average_peak}. We state this in the following corollary as it will be used later.

\begin{corollary}\label{c:ppfsingle}
If $B_{max}=0$ and $\mathcal{E} = e$, the privacy-power function $\mathcal{I}(e)$ for an i.i.d. input load $X$ is given by
\begin{equation}\label{eq:energyConstant}
\mathcal{I}(e) = \mathcal{I}(e,e).
\end{equation}
\end{corollary}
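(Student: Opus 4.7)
The plan is to derive Corollary \ref{c:ppfsingle} as an immediate specialization of Theorem \ref{th:average_peak}. First I would observe that with $B_{max}=0$ and a deterministic harvesting process $E_t=e$ for all $t$, the constraint (\ref{eq:peakconstraints}) reduces to the per-slot peak constraint $0 \leq X_t - Y_t \leq e$, and there is no longer any temporal coupling through the battery (the state $S_t$ is identically zero). Since this peak constraint deterministically bounds the average by $e$ as well, setting $\hat{P}=\bar{P}=e$ in Theorem \ref{th:average_peak} makes the average constraint non-binding, and the feasible set of conditionals $p_{Y|X}$ is exactly the one admissible for the present problem.

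For the achievability direction, I would exhibit a stationary, memoryless randomized policy: draw $Y_t$ given $X_t$ independently across time slots using the minimizer $p^*_{Y|X}$ of (\ref{expr_privacy_power}) at $(\bar{P},\hat{P})=(e,e)$. This fits the signature $f_t:\mathcal{X}^t\times\mathcal{Y}^{t-1}\times\mathcal{S}^t\to\mathcal{Y}$ (as a degenerate random function of $X_t$ alone), satisfies $0\leq Y_t\leq X_t$ and $X_t-Y_t\leq e=E_t$ per slot by construction, and makes the pairs $(X_t,Y_t)$ i.i.d., so $I_n=\tfrac{1}{n}I(X^n;Y^n)=I(X;Y)=\mathcal{I}(e,e)$ for every $n$, giving $\mathcal{I}(e)\leq \mathcal{I}(e,e)$.

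For the converse, I would single-letterize. Because $\{X_t\}$ is i.i.d., $H(X^n)=nH(X)$, while the chain rule together with the fact that conditioning reduces entropy gives $H(X^n\mid Y^n)\leq \sum_{t=1}^n H(X_t\mid Y_t)$. Hence $I(X^n;Y^n)\geq \sum_{t=1}^n I(X_t;Y_t)$. For each $t$, the induced marginal has $X_t\sim p_X$ and satisfies $0\leq X_t-Y_t\leq e$, so the induced conditional $p_{Y_t\mid X_t}$ is feasible in (\ref{expr_privacy_power}) with $\hat{P}=\bar{P}=e$, yielding $I(X_t;Y_t)\geq \mathcal{I}(e,e)$. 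Dividing by $n$ and passing to the limit delivers the matching lower bound $\mathcal{I}(e)\geq \mathcal{I}(e,e)$.

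There is no deep obstacle here; the only bookkeeping points are verifying that the proposed stationary policy lies in the admissible class (it does, trivially, because the zero battery state is vacuous and $E_t$ is a constant known at both the EMU and, implicitly, in the feasibility check), and noting that because $e$ is deterministic and known at both ends no auxiliary random state need be introduced to handle side information. Thus the result is effectively a direct restriction of Theorem \ref{th:average_peak} to the case where the average power constraint is implied by, and hence redundant with respect to, the peak constraint.
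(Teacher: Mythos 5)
Your proposal is correct and matches the paper's reasoning: the paper likewise obtains Corollary~\ref{c:ppfsingle} as an immediate specialization of Theorem~\ref{th:average_peak}, noting that with $\mathcal{E}=e$ and no battery the only operative constraint is the per-slot peak constraint $0\leq X-Y\leq e$, which already forces $\mathbb{E}[X-Y]\leq e$ and so renders the average constraint redundant at $(\bar{P},\hat{P})=(e,e)$. The achievability and converse details you supply are the standard single-letterization arguments the paper leaves implicit (and spells out only for Theorems~\ref{th:IT} and~\ref{th:ITR}), so there is no substantive difference in approach.
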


Corollary \ref{c:ppfsingle} follows directly from Theorem \ref{th:average_peak}, by considering only the peak power constraint.

When the harvested energy $E$ changes randomly over time, this corresponds to an SM privacy problem with a random state. In this case, we can identify two different scenarios based on the information available at the UP regarding the state of the system.

\subsection{EHD State Known only by the EMU}

In this scenario, the instantaneous state of the harvested energy is assumed to be known only at the EMU. However, we still assume that the UP knows the probability distribution of the states, i.e.,  $p_E$. This scenario can occur, for example, when the EHD models a connection to a secondary grid or a microgrid, and the primary UP cannot have any information about the exact amount of energy provided by this AES. Fig. \ref{fig:causalTx} depicts this scenario.

\begin{figure}[t]
\centering
\includegraphics[width=0.8\columnwidth]{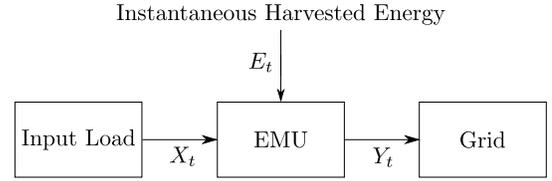}
\caption{EHD state information $E_t$ available only to the EMU.}
\label{fig:causalTx}
\end{figure}


%

\begin{theorem}\label{th:IT}
If $B_{max}=0$, and the energy available at the EHD in each time slot, E, is i.i.d. with distribution $p_E$, then the minimum information leakage rate $\mathcal{I}_0$ is given by
\begin{equation}\label{expr_zero}
\mathcal{I}_0 \triangleq \inf_{\substack{p_{Y|X,E}(y|x,e): 0\leq X-Y \leq E}} I\left(X;Y\right).
\end{equation}
\end{theorem}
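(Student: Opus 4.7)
The plan is to prove achievability and the converse separately, in each case single-letterizing by exploiting the i.i.d.\ structure of the input load and of the harvested energy.

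For achievability, I would fix any conditional $p^{*}_{Y\mid X,E}$ achieving within $\epsilon$ of the infimum in (\ref{expr_zero}), and have the EMU apply it memorylessly: in every slot $t$, sample $Y_t \sim p^{*}_{Y\mid X,E}(\cdot \mid X_t, E_t)$ using only $(X_t, E_t)$ together with fresh independent randomness. Because $\{X_t\}$ and $\{E_t\}$ are i.i.d.\ with $X_t$ independent of $E_t$, the triples $(X_t, Y_t, E_t)$ are i.i.d.\ across $t$, so $\tfrac{1}{n} I(X^n; Y^n) = I(X;Y) \le \mathcal{I}_0 + \epsilon$. The peak constraint $0 \le X_t - Y_t \le E_t$ holds by construction, and no battery or state feedback to the UP is required.

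For the converse, fix an arbitrary length-$n$ policy and, for each slot $t$, define the marginal conditional $p_{Y_t \mid X_t, E_t}$ obtained by averaging the policy over the past $(X^{t-1}, Y^{t-1}, E^{t-1})$. This marginalization is legitimate because $(X_t, E_t)$ is independent of the history under the i.i.d.\ assumptions (noting that $Y^{t-1}$ is a deterministic-plus-private-randomness function of $(X^{t-1}, Y^{t-2}, E^{t-1})$), and each marginal inherits the constraint $0 \le X_t - Y_t \le E_t$ almost surely. Using the i.i.d.\ structure of $X$, I would bound
\begin{equation*}
I(X^n; Y^n) \;=\; \sum_{t=1}^n H(X_t) - H(X^n \mid Y^n) \;\ge\; \sum_{t=1}^n I(X_t; Y_t),
\end{equation*}
where the inequality uses $H(X^n \mid Y^n) \le \sum_t H(X_t \mid Y_t)$. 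Next, form the time-averaged kernel $\bar p_{Y\mid X,E} = \tfrac{1}{n}\sum_t p_{Y_t \mid X_t, E_t}$, which is still feasible in (\ref{expr_zero}) because the feasible set is convex. Convexity of $I(X;Y)$ in the channel for fixed input distribution then gives $I(X;Y)_{\bar p} \le \tfrac{1}{n} \sum_t I(X_t; Y_t) \le \tfrac{1}{n} I(X^n; Y^n)$, and feasibility of $\bar p$ yields $\mathcal{I}_0 \le I(X;Y)_{\bar p}$, completing the converse.

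The step that most deserves care is the marginalization in the converse: one must verify explicitly that $(X_t, E_t)$ is independent of the policy history so that $p_{Y_t \mid X_t, E_t}$ is a bona-fide conditional, and that the almost-sure peak constraint survives the time-averaging into $\bar p$. Everything else is routine: the achievability becomes a direct single-letter computation once the i.i.d.\ policy is chosen, and convexity of mutual information in the channel for fixed input is a standard fact.
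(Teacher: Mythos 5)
Your proposal is correct and follows essentially the same route as the paper: memoryless application of a near-optimal kernel $p^{*}_{Y\mid X,E}$ for achievability, and a converse via $I(X^n;Y^n)\ge\sum_t I(X_t;Y_t)$ using the i.i.d.\ structure of $X$ and the fact that conditioning reduces entropy. Your final time-averaging step with the convexity of $I(X;Y)$ in the channel is a harmless but unnecessary detour, since each per-slot marginal kernel $p_{Y_t\mid X_t,E_t}$ is already feasible with $(X_t,E_t)\sim p_X\times p_E$, so $I(X_t;Y_t)\ge\mathcal{I}_0$ holds slot by slot, which is exactly how the paper concludes.
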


\begin{proof}
\textit{Achievability} We consider a conditional probability distribution $p_{Y| X, E}(y|x, e)$ that satisfies the conditions of Theorem \ref{th:IT} and we generate each $Y_t$ independently using $p_{Y| X,E}(y_t| x_t, e_t)$. The mutual information leakage rate is therefore given by $I(X;Y)$ whereas the peak power constraint in Theorem \ref{th:IT} are trivially satisfied.

\textit{Converse}. We assume that there is a series of power allocation functions that satisfy the peak power constraints. The information leakage rate of the resulting output load series satisfies the following chain of inequalities:
\begin{subequations}
\begin{align}
\label{eq:ppwT1}
 & \frac{1}{n}  I(X^{n};Y^{n})  = \frac{1}{n}\left[H(X^{n})-H(X^{n}|Y^{n} )\right] \\
 \label{eq:ppwT2}
 & =    \frac{1}{n} \sum_{t=1}^n [ H(X_t)-H(X_t|X^{t-1}Y^n)] \\
  \label{eq:ppwT3}
 & \ge 	\frac{1}{n} \sum_{t=1}^n [H(X_t)-H(X_t|Y_t)] \\
  \label{eq:ppwT4}
 & = 	\frac{1}{n} \sum_{t=1}^n I\left(X_t;Y_t\right) \\
 \label{eq:ppwT5}
 & \ge 	\frac{1}{n} \sum_{t=1}^n \inf_{\substack{p_{Y|X,E}(y|x,e): 0\leq X-Y \leq E}}   I\left(X;Y\right) \\
  \label{eq:ppwT6}
 & = 	 \inf_{\substack{p_{Y|X,E}(y|x,e): 0\leq X-Y \leq E}} I\left(X;Y\right) \\
  \label{eq:ppwT7}
 & = 	\mathcal{I}_0 ,
\end{align}
\end{subequations}
where (\ref{eq:ppwT3}) follows as conditioning reduces entropy; (\ref{eq:ppwT5}) follows because whatever strategy the EMU adopts, $I\left(X_t;Y_t\right)$ can never be smaller than the minimum of the mutual information over all conditional probability distribution functions $p_{Y|X,E}(y|x,e)$ that satisfy the peak power constraint; and (\ref{eq:ppwT7}) follows from the definition of $\mathcal{I}_0$.
\end{proof}

Theorem \ref{th:IT} suggests that the minimum information leakage rate in the absence of a storage device at the EHD can be characterized as the solution of a single-letter information theoretic optimization problem. Similarly to the privacy-power function with average and peak power constraints in (\ref{expr_privacy_power}), when the input load is discrete, the output load can be limited to the input load without loss of optimality, and the infimum becomes a minimum, which can be computed efficiently.

\subsection{EHD State Information Known at the UP}

\begin{figure}
\centering
\includegraphics[width=0.8\columnwidth]{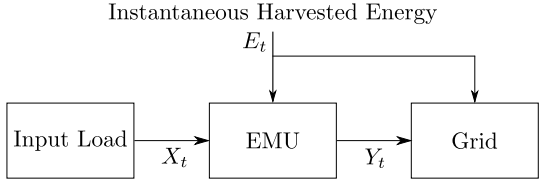}
\caption{State information $E_t$ available to the EMU and the UP.}
\label{fig:causalTxRx}
\end{figure}

We now assume that the state $E_t$ of the EHD is known by the UP as well as the EMU (Fig. \ref{fig:causalTxRx}). This is a worst-case situation and we expect that the amount of leaked information in this case is greater than or equal to that of the previous scenario, in which only the EMU knows the current state of the harvested energy. This situation can occur, for example, if the EHD is a solar panel, and the UP can estimate the harvested energy accurately from the weather forecast of the area and the specifications of the solar panel.



The privacy-power function when the harvested energy is i.i.d. with distribution $p_E$ and $B_{max}= 0$, is provided in the next theorem. We denote this quantity as $\bar{\mathcal{I}}_0$.

\begin{theorem}\label{th:ITR}
If $B_{max}=0$, the input load is i.i.d. with distribution $p_X$ and the state of the harvested energy is available at the UP, then the minimum information leakage rate $\bar{\mathcal{I}}_0$ is given by
\begin{equation}\label{leakage_TR}
\bar{\mathcal{I}}_0 = \inf_{\substack{p_{Y|X,E}(y|x,e): 0\leq X-Y \leq E}} I\left(X;Y|E \right) = \mathbb{E}_E [\mathcal{I}(E)].
\end{equation}
\end{theorem}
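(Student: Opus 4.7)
The plan is to mirror the proof of Theorem \ref{th:IT}, but throughout condition on the state sequence $E^n$ since it is now known at the UP. The starting observation is that, because $E^n$ is available at the UP, the genuine leakage is $\frac{1}{n}I(X^n;Y^n,E^n)$; using $X^n\perp E^n$ (the load and harvested energy are independent i.i.d.\ sequences), this equals $\frac{1}{n}I(X^n;Y^n\mid E^n)$, which is the quantity appearing in (\ref{leakage_TR}).

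For the converse, I would run the same single-letter chain as in (\ref{eq:ppwT1})--(\ref{eq:ppwT7}), but carrying $E^n$ inside every conditional entropy. Concretely, write $\frac{1}{n}I(X^n;Y^n\mid E^n) = \frac{1}{n}\sum_{t=1}^n\bigl[H(X_t\mid X^{t-1},E^n) - H(X_t\mid X^{t-1},Y^n,E^n)\bigr]$, then use $X_t\perp (X^{t-1},E^n)$ to replace the first term by $H(X_t)$, and drop conditioning in the second term down to $Y_t,E_t$ (since conditioning reduces entropy) to lower bound by $\frac{1}{n}\sum_t I(X_t;Y_t\mid E_t)$. Each summand is at least $\inf_{p_{Y\mid X,E}:\,0\le X-Y\le E} I(X;Y\mid E)$, proving $\bar{\mathcal{I}}_0$ is a lower bound.

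Achievability is obtained by a memoryless scheme: pick any feasible $p^*_{Y\mid X,E}$ and, in each slot, draw $Y_t$ independently from $p^*_{Y\mid X,E}(\cdot\mid x_t,e_t)$. The peak constraint $0\le X_t-Y_t\le E_t$ is satisfied slot by slot by feasibility of $p^*$, and the resulting process gives $\frac{1}{n}I(X^n;Y^n\mid E^n)=I(X;Y\mid E)$, so taking the infimum over $p^*$ attains $\bar{\mathcal{I}}_0$.

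For the second equality, decompose $I(X;Y\mid E)=\sum_e p_E(e)\,I(X;Y\mid E=e)$. Because the feasibility region $0\le X-Y\le E$ and the conditional $p_{Y\mid X,E=e}$ decouple across values of $e$, the infimum can be performed pointwise in $e$. For each fixed $e$, the inner problem is exactly the peak-constrained single-letter optimization $\mathcal{I}(e,e)$, which by Corollary \ref{c:ppfsingle} equals $\mathcal{I}(e)$; averaging over $p_E$ yields $\mathbb{E}_E[\mathcal{I}(E)]$. The only delicate step is the converse chain, where one must be careful that the conditioning on the full $E^n$ (rather than just $E_t$) does not prevent replacing $H(X_t\mid X^{t-1},E^n)$ by $H(X_t)$; this is fine precisely because $X_t$ is i.i.d.\ and independent of $E^n$, the same independence that also makes the initial reduction $I(X^n;Y^n,E^n)=I(X^n;Y^n\mid E^n)$ valid.
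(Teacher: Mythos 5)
Your proposal is correct and follows essentially the same route as the paper: the same single-letterization of $\tfrac{1}{n}I(X^n;Y^n\mid E^n)$ using the i.i.d.\ structure and ``conditioning reduces entropy,'' the same memoryless achievability scheme, and the same per-state decomposition $I(X;Y\mid E)=\sum_e p_E(e)\,I(X;Y\mid E=e)$ with each term bounded by $\mathcal{I}(e)$ via Corollary~\ref{c:ppfsingle}. The only (harmless) presentational differences are that you explicitly justify working with $I(X^n;Y^n\mid E^n)$ via $I(X^n;Y^n,E^n)=I(X^n;Y^n\mid E^n)$ and replace $H(X_t\mid E_t)$ by $H(X_t)$ using independence, whereas the paper keeps the conditioning on $E_t$ throughout.
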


\begin{proof}
\text{Achievability} follows trivially, as in Theorem \ref{th:IT}.

\textit{Converse}. We assume that there is a series of power allocation functions that satisfy the peak power constraints. We then want to show that the information leakage rate of the resulting output load series is lower bounded by the value given in (\ref{leakage_TR}). The following chain of inequalities applies:
\begin{subequations}
\allowdisplaybreaks
\begin{align}
\allowdisplaybreaks
 \frac{1}{n}  I(X^{n};&Y^{n}|E^n) = \frac{1}{n}\left[H(X^{n}|E^n)-H(X^{n}|Y^{n}, E^n )\right] \nonumber \\
 \label{eq:ppwTR2}
 & =    \frac{1}{n} \sum_{t=1}^n [H(X_t|E_t)-H(X_t|X^{t-1} Y^n E^n)] \\
  \label{eq:ppwTR3}
 & \ge 	\frac{1}{n} \sum_{t=1}^n [H(X_t|E_t)-H(X_t|Y_t E_t)] \\
  \label{eq:ppwTR4}
 & = 	\frac{1}{n} \sum_{t=1}^n  \sum_{k=1}^{|\mathcal{E}|} p_{e_k}   I\left(X_t;Y_t|E_t=e_k\right) \\
  \label{eq:ppwTR5}
 & \ge 	\frac{1}{n} \sum_{t=1}^n  \sum_{k=1}^{|\mathcal{E}|} p_{e_k} \mathcal{I}\left(e_k\right) \\
  \label{eq:ppwTR7}
 & = \sum_{k=1}^{|\mathcal{E}|} p_{e_k} \mathcal{I}\left(e_k\right)  \\
  \label{eq:ppwTR8}
 & = 	\mathbb{E}_E \left[ \mathcal{I}(E) \right],
\end{align}
\end{subequations}
where (\ref{eq:ppwTR3}) follows as conditioning reduces entropy; (\ref{eq:ppwTR4}) follows from considering all the states $e_k$; and (\ref{eq:ppwTR5}) follows from the definition of $\mathcal{I}(e)$ in (\ref{eq:energyConstant}).
\end{proof}

Theorem \ref{th:ITR} suggests that the minimum information leakage rate $\bar{\mathcal{I}}_0$ is the expected value of the privacy-power function with peak power constraint over the distribution of the states.


We have the following inequality between the privacy power function $\bar{\mathcal{I}}(e)$ and $\mathcal{I}(e, \infty)$, where the latter is the minimum information leakage rate with an average power constraint $e$:
\begin{equation}
\bar{\mathcal{I}}(e) \geq \mathcal{I} \left(e, \infty \right).
\label{eq:relat}
\end{equation}
Similarly, the following inequality also holds:
\begin{equation}
\bar{\mathcal{I}}_0 = \mathbb{E}_E \left[ \mathcal{I}(E) \right] \geq \mathcal{I} \left( \bar{P}_E, \infty \right) = \mathcal{I}_{\infty}.
\label{eq:relat2}
\end{equation}

(\ref{eq:relat2}) states that the minimum information leakage rate when the EHD does not have a battery, and the harvested energy state is random but known at the UP, is greater than or equal to that which can be achieved with an infinite battery.


From the chain rule of mutual information, we have
\begin{subequations}
\begin{align}\label{eq:compar1}
 I(X;YE)  &= I(X;E) + I(X;Y|E) = I(X;Y|E), \\
 I(X;YE)  &= I(X;Y) + I(X;E|Y), \label{eq:compar2}
\end{align}
\end{subequations}
where (\ref{eq:compar1}) follows since $X$ and $E$ are independent of each other. We have $I(X;E|Y)\geq 0$, from the non-negativity of mutual information; and thus, $I(X;Y) \leq I(X;Y|E)$. From Theorems \ref{th:IT} and \ref{th:ITR}, we have $\mathcal{I}_0 \leq \bar{\mathcal{I}}_0$.

In the next section we provide numerical results for both the infinite and zero battery scenarios studied in this paper.


\section{Binary Input Load} \label{sec:binary}

We compare the minimum information leakage rates for the infinite and zero battery scenarios in the binary input load case, i.e., $\mathcal{X}= \{0,1\}$. $X$ follows an independent Bernoulli distribution with $\Pr\{X=1\}=q_x$. We also consider a binary energy harvesting process, i.e., $\mathcal{E}= \{0,1\}$, where $\Pr\{E_t=1\} = p_e$ denotes the probability that the EHD harvests one unit of energy.

\begin{figure}[t]
\centering
\includegraphics[width=0.8\columnwidth]{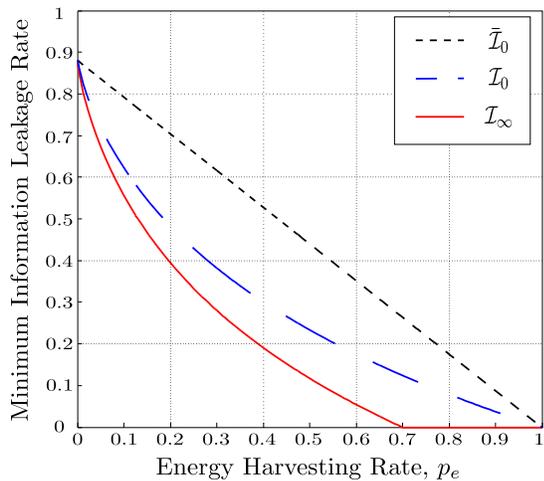}
\caption{Comparison of the minimum information leakage rate in the infinite and zero battery scenarios with binary input load and a binary energy harvesting process. We set $q_x=0.7$, and plot the minimum information leakage rate with respect to $p_e$.}
\label{fig:BinaryZeroInfinite}
\end{figure}

If $B= \infty$, the minimum information leakage rate is given by
\begin{align*}
\mathcal{I}(p_e, \infty) =& p_e \log_2 p_e - q_x \log_2 q_x \\ &- (1- q_x + p_e) \log_2 (1- q_x + p_e),
\end{align*}
if $p_e \leq q_x$, and $\mathcal{I}(p_e, \infty) = 0$ otherwise.

When $B_{max}=0$, there are two scenarios. The first is that in which the state information is known only by the EMU. We need to identify the optimal conditional distribution $p_{Y|X,E}$ with $X-Y \leq E$, that minimizes $I(X;Y)$. As mentioned earlier we can set $\mathcal{Y} = \{0,1\}$ without loss of optimality. In this scenario, another parameter of interest is the probability of using the energy available at the EHD when the input load is $1$ and there is harvested energy, i.e., $E=1$. It is possible to show that the information leakage rate is minimized when we always use the available energy when there is non-zero input load. In this case the  minimum information leakage rate is found to be $h(1 - q_x + p_e \cdot q_x) - q_x \cdot h(p_e)$, where $h(\cdot)$ is the binary entropy function defined as $h(p) = - p \log_2 p - (1-p) \log_2(1-p)$.

Finally, we consider the zero-battery case when the state is known also by the UP. When the peak power constraint is $e=1$, no information is leaked, whereas if $e=0$, the input load is known perfectly by the UP, leading to a leakage of $H(X)$. Hence, the minimum information leakage rate when the state information is known by the UP is given by $(1-p_e)\cdot h(q_x)$.

In Fig. \ref{fig:BinaryZeroInfinite} we plot the minimum information leakage rate for the three scenarios as a function of $p_e$. As expected, the least information leakage rate is achieved when $B_{max}=\infty$, since this condition is equivalent to having only an average power constraint. The worst scenario is when $B=0$ and the receiver knows the state, denoted by $\bar{\mathcal{I}}_0$. The information leakage rate reduces significantly if the state is not known by the UP. The difference between the solid and the long-dashed curves in the figure illustrates the benefit of having a battery at the EHD. We can see that the gain from the battery is much higher when the harvesting rate is higher, i.e., when $p_e$ is high. This is expected since when $p_e$ is low, there is less energy to be stored for future time slots.


\section{Conclusions} \label{sec:conclusion}

We have studied the information leakage in an SM system by considering an EHD with an infinite or a zero capacity battery. In both cases, we have provided single letter expressions for evaluating the minimum information leakage rate, which indicates the optimal level of information theoretic privacy that can be achieved. Considering the zero battery scenario, we have also studied the information leakage rate when the UP knows the exact amount of harvested energy in each time slot. Finally, we have provided some simulation results for the case of a binary input load. In our future work we plan to study the minimum information leakage rate with a finite capacity battery, which is a challenging problem due to the memory introduced into the system by the existence of the battery.

\bibliographystyle{ieeetran}
\bibliography{refICC2015}

\end{document}